\documentclass[suppldata]{interact}\usepackage[]{graphicx}\usepackage[usenames,dvipsnames]{xcolor}
\makeatletter
\def\maxwidth{ %
  \ifdim\Gin@nat@width>\linewidth
    \linewidth
  \else
    \Gin@nat@width
  \fi
}
\makeatother

\definecolor{fgcolor}{rgb}{0.345, 0.345, 0.345}

\usepackage{framed}
\makeatletter
\newenvironment{kframe}{%
 \def\at@end@of@kframe{}%
 \ifinner\ifhmode%
  \def\at@end@of@kframe{\end{minipage}}%
  \begin{minipage}{\columnwidth}%
 \fi\fi%
 \def\FrameCommand##1{\hskip\@totalleftmargin \hskip-\fboxsep
 \colorbox{shadecolor}{##1}\hskip-\fboxsep
     \hskip-\linewidth \hskip-\@totalleftmargin \hskip\columnwidth}%
 \MakeFramed {\advance\hsize-\width
   \@totalleftmargin\z@ \linewidth\hsize
   \@setminipage}}%
 {\par\unskip\endMakeFramed%
 \at@end@of@kframe}
\makeatother

\definecolor{shadecolor}{rgb}{.97, .97, .97}
\definecolor{messagecolor}{rgb}{0, 0, 0}
\definecolor{warningcolor}{rgb}{1, 0, 1}
\definecolor{errorcolor}{rgb}{1, 0, 0}
\newenvironment{knitrout}{}{} 

\usepackage{alltt}
\usepackage{multirow}
\usepackage[]{graphicx}
\usepackage[]{color}
\usepackage{algorithm,algpseudocode,float}
\usepackage{subcaption}
\makeatletter
\def\maxwidth{ %
	\ifdim\Gin@nat@width>\linewidth
	\linewidth
	\else
	\Gin@nat@width
	\fi
}
\makeatother

\definecolor{fgcolor}{rgb}{0.345, 0.345, 0.345}

\usepackage{framed}
\makeatletter
{\par\unskip\endMakeFramed%
	\at@end@of@kframe}
\makeatother

\definecolor{shadecolor}{rgb}{.97, .97, .97}
\definecolor{messagecolor}{rgb}{0, 0, 0}
\definecolor{warningcolor}{rgb}{1, 0, 1}
\definecolor{errorcolor}{rgb}{1, 0, 0}

\usepackage{alltt}
\usepackage{amsmath,amsthm,mathtools}
\usepackage{graphics}
\usepackage{algorithm}
\usepackage{algpseudocode}
\usepackage{color,soul}
\usepackage{float}
\usepackage{multicol}
\usepackage[makeroom]{cancel}
\usepackage{comment}
\usepackage[usenames,dvipsnames]{xcolor}
\usepackage[justification=justified,skip=2pt ]{caption}
\usepackage{epstopdf}
\usepackage{natbib}

\renewcommand\bibfont{\fontsize{10}{12}\selectfont}
\makeatletter
\def\NAT@def@citea{\def\@citea{\NAT@separator}}
\makeatother
\theoremstyle{plain}
\newtheorem{theorem}{Theorem}[section]
\newtheorem{lemma}[theorem]{Lemma}

\newtheorem{proposition}[theorem]{Proposition}

\theoremstyle{definition}

\newtheorem{example}[theorem]{Example}

\theoremstyle{remark}

\IfFileExists{upquote.sty}{\usepackage{upquote}}{}

\IfFileExists{upquote.sty}{\usepackage{upquote}}{}
\begin{document}
	
	\title{Bias Reduction for Local Polynomial Derivative Estimation}
	
\author{
\name{
Fujia Chang\textsuperscript{a}
\thanks{CONTACT Fujia Chang. Email: yyaoo@mail.ubc.ca}
and W. John Braun\textsuperscript{a}
}
\affil{
\textsuperscript{a}Department of Computer Science, Mathematics, Physics and Statistics,
University of British Columbia, Kelowna, BC, Canada
}
}

\maketitle
	
\begin{abstract}
Local polynomial smoothing is commonly used in non-parametric regression, but local linear derivative estimation still has a bias of order $O(h^2)$. This paper proposes an iterative data sharpening method to reduce the bias of derivative estimates while retaining the simplicity of local linear fitting. The method is based on two expectation operators: $L_0$, acting on the regression function, and $L_1$, acting on the first-order derivative. By repeatedly applying the residual operator $R = I - L_0$, a series of sharpened derivative estimates can be constructed. After $l$ sharpening steps, the bias order can be reduced from $O(h^2)$ to $O(h^{2l+2})$. For the Gaussian kernel, all sharpening coefficients equal $1$, giving a simple closed-form single-bandwidth expression. Simulation experiments on three smooth test functions show that this method can significantly reduce the estimated bias, while revealing a bias--variance trade-off.
\end{abstract}
\begin{keywords}
	local regression, nonparametric regression, derivative estimation, data sharpening
\end{keywords}

\section{Introduction}

In nonparametric regression, the local polynomial method is a standard tool for using noisy data to estimate unknown regression functions. Among the various available methods, local linear estimation may be the most widely used one. At an interior point, the order of bias is $O(h^2)$, and the order of variance is $O((nh)^{-1})$, so it can be used as a convenient reference smoother \citep{fan2018local, gasser1984estimating, buja1989linear}. A natural thought is how to reduce the smoothing bias without using more complex estimators. 
The idea of data sharpening was first developed in density estimation by \citet{choiHallDensity}, and was later applied to nonparametric regression by \citet{choi2000data}.
The basic idea is very simple: instead of changing the smoother itself, one just makes a small adjustment to the design points or response values, and then feeds the adjusted data into a smoother, such as the Nadaraya--Watson estimator \citep{nadaraya1964estimating, watson1964smooth} or the local linear estimator. The dominant term in the asymptotic expansion for the  bias can then be removed by choosing an appropriate adjustment while keeping the original smoother unchanged. 
Since then, this idea has also been extended to other nonparametric problems, including high-order data sharpening for density estimation \citep{hallKangHighOrder}, unimodal kernel density estimation \citep{hall2005unimodal}, bandwidth selection and related refinements for sharpened regression estimators \citep{naitoSharpening}, and high-order sharpening under dependent errors \citep{heSharpening}.
It has been used for unimodal density estimation under interval censoring \citep{becker2017interval}, and is also used for local likelihood models. 
In particular, \citet{BRAUN2020108831} developed data sharpening under the local likelihood framework through the Firth bias-reducing score adjustment \citep{firth1993bias}, and their framework also covers derivative estimation.

Reduction of bias can also be done through a combination of estimators obtained using different bandwidths. In nonparametric as well as semiparametric regression, \citet{cheng2018bias} proposed a multi-bandwidth bias reduction method. The method combines local linear estimators under different bandwidths to reduce bias. In this way, one can construct a bias-reduced estimator without necessarily having to fit higher-order local polynomials. 
Recently, \citet{Chen02012025} proposed an iterative data sharpening method for estimating the regression mean function.
With the local constant and local linear smoothers, this method can reduce the bias in the interior of the design range to order $O(h^{2M+2})$ after $M$ steps.

Derivative estimation remains different and is usually more sensitive.
Moving from mean estimation to derivative estimation is not just a change of estimation target. The bias expansion of the derivative estimator is different, and its variance is also larger. For example, at an interior point, the local linear estimator of the first-order derivative typically has variance of order $O((nh^3)^{-1})$, not $O((nh)^{-1})$ \citep{fan2018local}. This shows that the bias--variance trade-off in derivative estimation is more important. One way to reduce the derivative bias is to use higher-order local polynomials. However, higher-order derivative estimators can be more sensitive to bandwidth selection and may not be stable enough in finite samples, especially near the boundary \citep{fan2018local, de2013derivative, li2003multivariate}. 
In principle, a multi-bandwidth method similar to \citet{cheng2018bias} could also be adapted to derivative estimation, for example by combining local linear slope estimates under different bandwidths.
However, such an approach requires fitting estimators at several bandwidths and choosing a bandwidth grid.
It is therefore meaningful to study a bias reduction method that uses only a single bandwidth and at the same time keeps the simplicity of the local polynomial derivative estimator.

Motivated by these existing approaches, this paper proposes an iterative data sharpening method for local linear derivative estimation.
The method uses two expectation operators. The first operator $L_0$ acts on the regression function. The second operator $L_1$ acts on the first-order derivative of the regression function. We repeatedly apply the residual operator $R = I - L_0$ to construct the sharpened function, and then use $L_1$ to define the derivative estimator. 
Unlike the local likelihood construction of \citet{BRAUN2020108831}, the proposed method uses the iterative residual operator $R=I-L_0$; unlike multi-bandwidth approaches, it repeatedly uses the same local linear smoother with a single bandwidth.
Then the dominant derivative bias can be gradually cancelled. After $L$ sharpening steps, the bias can be reduced from $O(h^2)$ to $O(h^{2L+2})$. For the Gaussian kernel, the sharpening coefficient at each step is equal to $1$, so a simple closed-form expression can be obtained.

We tested the method's performance in finite samples via a simulation study. The simulation study uses a smooth sine function and a polynomial function, and demonstrates the effects of sharpening on the bias, standard deviation and integrated mean squared error of the derivative estimator. Since the goal is to reduce bias, the amount of variance introduced by the method also must be checked. Lastly, we apply the approach to the motorcycle impact data. Here the true derivative is unknown, the primary objective is to demonstrate how the derivative estimate curve changes after sharpening.
\section{Background}

\subsection{Local Polynomial Regression Derivative Estimation}
 
Take independent observations $\left(x_1, y_1\right), \ldots,\left(x_n, y_n\right)$ from the regression model
$$
y_i=g\left(x_i\right)+\varepsilon_i, \quad i=1, \ldots, n,
$$
where $g(x)$ is an unknown smooth regression function and the $\varepsilon_i$ are independent errors with zero mean and finite variance. The target is $g^{(r)}(x)$, the $r$-th derivative of $g$ at some fixed point $x \in \mathbb{R}$.

The local polynomial approach fits a Taylor approximation of $g(x_i)$ near $x$,
$$
g\left(x_i\right) \approx a_0+a_1\left(x_i-x\right)+\cdots+a_p\left(x_i-x\right)^p,
$$
with $p \geq r$, and reads off the relevant coefficient. The vector $\boldsymbol{a}=\left(a_0, \ldots, a_p\right)^{\top}$ is the minimiser of the weighted least squares 
$$
\min _a \sum_{i=1}^n\left[y_i-\sum_{j=0}^p a_j\left(x_i-x\right)^j\right]^2 K_h\left(x_i-x\right),
$$
where $K_h(u)=\frac{1}{h} K\left(\frac{u}{h}\right)$ is a symmetric kernel with bandwidth $h$ that puts more weight on observations close to $x$.

In matrix form, write
$$
\mathbf{X}=\left[\begin{array}{cccc}
1 & x_1-x & \cdots & \left(x_1-x\right)^p \\
\vdots & \vdots & & \vdots \\
1 & x_n-x & \cdots & \left(x_n-x\right)^p
\end{array}\right], \quad \mathbf{W}=\operatorname{diag}\left(K_h\left(x_1-x\right), \ldots, K_h\left(x_n-x\right)\right) ,
$$
so that by taking the derivative,
$$
\hat{\boldsymbol{a}}=\left(\mathbf{X}^{\top} \mathbf{W} \mathbf{X}\right)^{-1} \mathbf{X}^{\top} \mathbf{W} \mathbf{y} .
$$
The local polynomial estimator of $g(x)$ is the first entry of $\hat{\boldsymbol{a}}$,
$$
\hat{g}(x)=\hat{a}_0=\mathbf{e}_0^{\top} \hat{\boldsymbol{a}},
$$
with $\mathbf{e}_0=(1,0, \ldots, 0)^{\top} \in \mathbb{R}^{p+1}$, and the $r$-th derivative estimator picks out $\hat{a}_r$ scaled by $r!$:
$$
\hat{g}^{(r)}(x)=r!\cdot \hat{a}_r=r!\cdot \mathbf{e}_r^{\top} \hat{\boldsymbol{a}},
$$
where $\mathbf{e}_r$ is the standard basis vector with a one in the $(r+1)$-th position.

\subsection{Bias Reduction Strategies} \label{theory}
Several bias reduction methods have been studied. \citet{cheng2018bias} proposed a multi-bandwidth bias reduction method  for estimating the regression function itself. Their method combines local linear estimates of $g(x_0)$ obtained under several bandwidths and extrapolates their relationship with $h^2$ to $h^2=0$. They did not consider derivative estimation. For comparison in this paper, we adapt their bandwidth-extrapolation idea to local linear first-derivative estimation.

Assume
\[
Y_i = g(X_i) + \epsilon_i,
\]
where $g$ is twice differentiable, the errors $\epsilon_i$ are mutually independent, with mean zero and finite variance. In order to estimate $g'(x_0)$, a local linear regression in the neighborhood of $x_0$ is fitted:
\[
Y_i \approx a + b(X_i - x_0).
\]
The slope $b$ is the local linear derivative estimate, written as $\hat{g}'_h(x_0)$. At an interior point, the order of its dominant bias is $O(h^2)$.
The multi-bandwidth correction method computes the local linear slope estimate at the same point $x_0$, but uses several different bandwidths $h_1,\ldots,h_B$. From this, a set of slope estimates can be obtained:
\[
b_i = \hat{g}'_{h_i}(x_0), \qquad i = 1, \ldots, B.
\]
Since the dominant bias is proportional to $h_i^2$, these estimates can be approximated by the regression model:
\[
b_i = \alpha + \beta h_i^2 + e_i.
\]
Then, the estimated intercept $\hat{\alpha}$ is viewed as the bias-corrected derivative estimate:
\[
\hat{g}'_{\mathrm{BC}}(x_0) = \hat{\alpha}.
\]
The basic idea of this method is to first fit the relationship between the slope estimate and $h^2$, and then set $h^2=0$, so as to remove the dominant $O(h^2)$ bias term. This idea is, in fact, very simple, but it requires repeated computation under different bandwidths. In addition, the choice of bandwidth also affects the performance.

\section{Iterated Data Sharpening for Derivative Estimation}\label{sec:iterated_ds}

The data sharpening method for regression function estimation was originally proposed by \citet{choi2000data} as a bias reduction method. \citet{Chen02012025} then develops it into an iterative procedure for nonparametric regression estimation, while \citet{BRAUN2020108831} introduced sharpening ideas related to first-order derivative estimation. The following method extends the idea of derivative sharpening to an iterative framework. We first discuss the general kernel. The Gaussian kernel gives a simpler closed form, which is discussed later.

\subsection{General framework}

In this section, $x$ is regarded as an interior point of the support. We assume that $K$ is a symmetric kernel and that $g$ is sufficiently smooth near $x$.

Let $\hat g(x)$ and $\hat g'(x)$ denote the local linear estimators of the regression function and its first derivative at $x$. Define
\[
L_0[g](x) := \mathbb{E}[\hat g(x)],
\qquad
L_1[g](x) := \mathbb{E}[\hat g'(x)].
\]

At an interior point, the local linear estimators can be written in equivalent-kernel form. For a symmetric kernel $K$, we write
\[
L_0[g](x) = \int K(u)\,g(x+hu)\,du,
\qquad
L_1[g](x) = \frac{1}{\mu_2 h}\int u\,K(u)\,g(x+hu)\,du,
\]
where
\[
\mu_j := \int u^j K(u)\,du
\]
is the $j$th moment of $K$.

Using the Taylor expansion of $g(x+hu)$ around $x$,
\[
g(x+hu) = \sum_{r=0}^\infty \frac{(hu)^r}{r!}\,g^{(r)}(x),
\]
we obtain
\[
L_0[g](x)
= \int K(u)\sum_{r=0}^\infty \frac{(hu)^r}{r!}\,g^{(r)}(x)\,du
= \sum_{r=0}^\infty \frac{h^r}{r!}\,g^{(r)}(x)\int u^r K(u)\,du.
\]
Since $K$ is symmetric, all odd moments vanish, so only the even-order terms remain:
\[
L_0[g](x)
= g(x) + \sum_{k=1}^\infty \frac{\mu_{2k}}{(2k)!}\,h^{2k}\,g^{(2k)}(x).
\]
Set
\[
a_k := \frac{\mu_{2k}}{(2k)!},
\qquad k \ge 1.
\]

A similar calculation gives the expansion of $L_1[g](x)$. We have
\[
L_1[g](x)
= \frac{1}{\mu_2 h}\int u\,K(u)\sum_{r=0}^\infty \frac{(hu)^r}{r!}\,g^{(r)}(x)\,du
= \frac{1}{\mu_2 h}\sum_{r=0}^\infty \frac{h^r}{r!}\,g^{(r)}(x)\int u^{r+1}K(u)\,du.
\]
In this case, the symmetry of $K$ leaves only the terms with odd-order derivatives of $g$. Taking $r = 2k+1$, we get
\[
L_1[g](x)
= g'(x) + \sum_{k=1}^\infty \frac{\mu_{2k+2}}{(2k+1)!\,\mu_2}\,h^{2k}\,g^{(2k+1)}(x).
\]
Set
\[
b_k := \frac{\mu_{2k+2}}{(2k+1)!\,\mu_2},
\qquad k \ge 1.
\]

Thus $L_0$ and $L_1$ have the formal expansions
\begin{equation}
L_0[g] = g + \sum_{k=1}^\infty a_k\, h^{2k}\, g^{(2k)},
\label{eq:L0_general}
\end{equation}
and
\begin{equation}
L_1[g] = g' + \sum_{k=1}^\infty b_k\, h^{2k}\, g^{(2k+1)}.
\label{eq:L1_general}
\end{equation}

Now define the residual operator
\[
R := I - L_0.
\]
From \eqref{eq:L0_general}, this gives
\begin{equation}
Rg = -\sum_{k=1}^\infty a_k\, h^{2k}\, g^{(2k)}.
\label{eq:R_expand}
\end{equation}

The sharpened functions are constructed recursively as
\begin{equation}
g_0 := g,
\qquad
g_l := g_{l-1} + \alpha_l\, R^l g,
\qquad l \ge 1.
\label{eq:gm_recursive}
\end{equation}
Here the constants $\alpha_1, \alpha_2, \dots$ are chosen so that one additional leading bias term is removed at each step.

We first need two simple lemmas.

\begin{lemma}\label{lem:Rmg}
For any integer $l \ge 1$,
\[
R^l g = (-a_1)^l\, h^{2l}\, g^{(2l)} + O(h^{2l+2}).
\]
\end{lemma}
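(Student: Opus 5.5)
We argue by induction on $l$, working with the formal expansion \eqref{eq:R_expand}. The key observation is that $R$ is a linear operator which, to leading order, acts as the differential operator $-a_1 h^2 D^2$, where $D$ denotes differentiation. More precisely, we can write
\[
R = -\sum_{k=1}^\infty a_k h^{2k} D^{2k} = -a_1 h^2 D^2 + h^4 S,
\qquad S := -\sum_{k=2}^\infty a_k h^{2k-4} D^{2k}.
\]
Here $S$ is a formal series in $h^2$ with coefficients that are differential operators. Throughout, $O(h^{m})$ means a formal series in $h^2$, starting at power $h^m$, whose coefficients are derivatives of $g$ evaluated at $x$. These coefficients are bounded because $g$ is assumed sufficiently smooth near $x$.

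The base case $l=1$ is exactly \eqref{eq:R_expand} with the terms $k\ge 2$ absorbed into $O(h^4)$. For the inductive step, assume
\[
R^{l}g = (-a_1)^l h^{2l} g^{(2l)} + h^{2l+2} r_l,
\]
where $r_l$ is a formal series in $h^2$ whose coefficients are derivatives of $g$ of order at least $2l+2$. We then apply $R$ to both sides. Since $R$ is linear and commutes with multiplication by powers of $h$, and since \eqref{eq:R_expand} holds for any sufficiently smooth function (in particular for $g^{(2l)}$ and for the coefficients of $r_l$), we obtain
\[
R\big(g^{(2l)}\big) = -a_1 h^2 g^{(2l+2)} + O(h^4),
\qquad
R(r_l) = O(h^2).
\]
Note that $R$ maps any smooth function to $O(h^2)$, because $L_0$ reproduces constants. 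Combining these gives
\[
R^{l+1} g = (-a_1)^{l+1} h^{2l+2} g^{(2l+2)} + O(h^{2l+4}),
\]
which is the claim for $l+1$.

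An alternative route is to expand $(-a_1h^2D^2 + h^4 S)^l$ directly. Because $D$ commutes with $S$, every term except $(-a_1h^2D^2)^l$ contains at least one factor $h^4S$, and is therefore of order $h^{2(l-1)+4} = h^{2l+2}$. This gives the result in a single step.

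The main obstacle is rigor rather than algebra. The expansions \eqref{eq:L0_general}--\eqref{eq:R_expand} are formal, so we must justify that applying $R$ to a remainder term preserves its order. To handle this, we would replace the infinite series by a finite Taylor expansion with an integral remainder, assuming $g\in C^{2l+2}$ (or a bit more) near $x$ and that $K$ has enough finite moments (e.g.\ compact support, or Gaussian tails). We then verify that $L_0$ applied to a term of the form $h^{m}\phi$ with $\phi$ bounded gives $O(h^m)$. This follows since $\int |K| < \infty$, so $L_0$ is a bounded operator in the sup norm on a neighbourhood of $x$. With this bound in place, the induction above goes through without further difficulty.
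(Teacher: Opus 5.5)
Your proposal is correct and follows essentially the same route as the paper: induction on $l$, applying $R$ to the inductive hypothesis, reading off the leading term from the $k=1$ part of \eqref{eq:R_expand}, and absorbing everything else into $O(h^{2l+4})$ (your splitting of $R^lg$ into its leading term plus $h^{2l+2}r_l$ and applying $R$ by linearity is the same computation). One caution, which affects only your rigor remark: sup-norm boundedness of $L_0$ gives only $R(h^{2l+2}r_l)=O(h^{2l+2})$, which loses the extra $h^2$ the induction needs, so a rigorous version should instead use $\|R\phi\|_\infty\le Ch^2\|\phi''\|_\infty$ (from $\int K=1$, $\mu_1=0$, $\mu_2<\infty$) together with $D_xL_0=L_0D_x$; this makes $r_l''$ the level-$l$ remainder for $g''$, hence bounded when $g\in C^{2l+4}$.
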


\begin{proof}
We prove the result by induction on $l$.

For $l = 1$, \eqref{eq:R_expand} gives
\[
Rg
= -a_1 h^2 g^{(2)} - \sum_{k=2}^\infty a_k\, h^{2k}\, g^{(2k)}
= -a_1 h^2 g^{(2)} + O(h^4).
\]
So the result holds for $l = 1$.

Now suppose the result holds for some $l \ge 1$. That is,
\[
R^l g = (-a_1)^l\, h^{2l}\, g^{(2l)} + O(h^{2l+2}).
\]
Applying $R$ once more and using \eqref{eq:R_expand}, we obtain
\[
R^{l+1} g
= R(R^l g)
= -\sum_{k=1}^\infty a_k\, h^{2k}\, (R^l g)^{(2k)}.
\]
The leading term comes from $k = 1$:
\[
-a_1 h^2 (R^l g)^{(2)}
= -a_1 h^2 \left[ (-a_1)^l\, h^{2l}\, g^{(2l+2)} + O(h^{2l+2}) \right].
\]
All terms with $k \ge 2$ are of higher order because they contain at least a factor $h^4$. Therefore,
\[
R^{l+1} g = (-a_1)^{l+1}\, h^{2l+2}\, g^{(2l+2)} + O(h^{2l+4}).
\]
This proves the result for $l+1$, and the induction argument is complete.
\end{proof}

\begin{lemma}\label{lem:L1Rmg}
For any integer $l \ge 1$,
\[
L_1[R^l g] = (-a_1)^l\, h^{2l}\, g^{(2l+1)} + O(h^{2l+2}).
\]
\end{lemma}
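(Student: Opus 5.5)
The plan is to combine Lemma~\ref{lem:Rmg} with the expansion \eqref{eq:L1_general} of $L_1$, treating $R^l g$ as a new smooth function to which $L_1$ is applied. Set $f := R^l g$. By \eqref{eq:L1_general},
\[
L_1[f] = f' + \sum_{k=1}^\infty b_k\, h^{2k}\, f^{(2k+1)},
\]
so the task is to identify the leading order of $f'$ and show that the remaining sum contributes only at order $h^{2l+2}$.

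The key point is that the expansion in Lemma~\ref{lem:Rmg} may be differentiated term by term. The cleanest way to see this is to note that $R$ commutes with differentiation: from \eqref{eq:R_expand}, $(Rg)' = -\sum_k a_k h^{2k} g^{(2k+1)} = R[g']$. Equivalently, from the integral form, $\frac{d}{dx}\int K(u) g(x+hu)\,du = \int K(u) g'(x+hu)\,du$. Hence $(R^l g)^{(j)} = R^l[g^{(j)}]$ for every $j$. Applying Lemma~\ref{lem:Rmg} to $g'$ in place of $g$ (which is legitimate since $g$ is assumed sufficiently smooth) then gives
\[
f' = R^l[g'] = (-a_1)^l h^{2l} g^{(2l+1)} + O(h^{2l+2}),
\]
which is exactly the claimed leading term.

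For the remaining terms, apply the same lemma to $g^{(2k+1)}$, giving $f^{(2k+1)} = R^l[g^{(2k+1)}] = O(h^{2l})$. Each correction term $b_k h^{2k} f^{(2k+1)}$ is therefore $O(h^{2l+2k}) \subseteq O(h^{2l+2})$ for $k\ge1$. Summing yields the result.

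The main obstacle is justifying that the $O(\cdot)$ remainders survive differentiation and the infinite summation. At the formal power-series level used in the paper this is automatic, since everything is a series in $h^2$ with derivative coefficients and the commutation $R\,\partial = \partial\, R$ holds coefficientwise. The commutation property is what I would lean on to avoid differentiating a remainder term directly. Uniformity of the remainders would need only boundedness of finitely many derivatives of $g$ near $x$ if one truncates the Taylor expansions at a finite order.
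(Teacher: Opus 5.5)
Your proof is correct and follows the paper's argument: expand $L_1[R^l g]$ via \eqref{eq:L1_general}, read off the leading term of $(R^l g)'$ from Lemma~\ref{lem:Rmg}, and observe that the remaining sum carries an extra factor of $h^2$. Your commutation identity $(R^l g)^{(j)} = R^l[g^{(j)}]$ only makes rigorous two steps the paper leaves implicit: its ``differentiating with respect to $x$'' step, and the bound $(R^l g)^{(2k+1)} = O(h^{2l})$ needed for the higher-order sum.
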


\begin{proof}
Using \eqref{eq:L1_general},
\[
L_1[R^l g] = (R^l g)' + \sum_{k=1}^\infty b_k\, h^{2k}\, (R^l g)^{(2k+1)}.
\]
By Lemma~\ref{lem:Rmg},
\[
R^l g = (-a_1)^l\, h^{2l}\, g^{(2l)} + O(h^{2l+2}).
\]
Differentiating with respect to $x$ gives
\[
(R^l g)' = (-a_1)^l\, h^{2l}\, g^{(2l+1)} + O(h^{2l+2}).
\]
The remaining sum has an extra factor of $h^2$, so it is of order $O(h^{2l+2})$. Hence,
\[
L_1[R^l g] = (-a_1)^l\, h^{2l}\, g^{(2l+1)} + O(h^{2l+2}).
\]
\end{proof}

We also use the fact that each sharpened derivative estimator has an expansion involving only odd-order derivatives of $g$.

\begin{lemma}\label{lem:L1gm_form}
For each $l \ge 0$, there exist constants $C_{l,q}$ such that
\[
L_1[g_l] = g' + \sum_{q=1}^\infty C_{l,q}\, h^{2q}\, g^{(2q+1)}.
\]
\end{lemma}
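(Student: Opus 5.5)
We argue by induction on $l$, working throughout with the formal expansions \eqref{eq:L0_general}, \eqref{eq:L1_general} and \eqref{eq:R_expand}. The key observation is structural. $R$ is a formal power series in $h^2 D^2$, where $D$ denotes differentiation in $x$. Explicitly,
\[
R = -\sum_{k\ge1} a_k h^{2k} D^{2k},
\]
with no constant term. Consequently, for every $m\ge1$, $R^m$ is again a formal series of the form
\[
R^m = \sum_{j\ge m} c_{m,j}\, h^{2j} D^{2j}
\]
for some constants $c_{m,j}$ depending only on the kernel moments. We would record this first as an auxiliary claim, proved by a short induction on $m$. The inductive step composes two such series: the coefficient of $h^{2j}D^{2j}$ in $R\cdot R^{m}$ is
\[
c_{m+1,j} = -\sum_{k\ge 1} a_k\, c_{m,j-k},
\]
which is a finite sum. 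This claim refines Lemma~\ref{lem:Rmg}: it identifies the full expansion of $R^m g$, not just its leading term.

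Unwinding the recursion \eqref{eq:gm_recursive} gives
\[
g_l = g + \sum_{m=1}^{l} \alpha_m R^m g .
\]
Since $L_1$ is linear, and since by \eqref{eq:L1_general}
\[
L_1 = D + \sum_{k\ge1} b_k h^{2k} D^{2k+1},
\]
it is enough to check the form of $L_1[R^m g]$ for each $m$. Composing the two series yields
\[
L_1 R^m = \Big(D + \sum_{k\ge1} b_k h^{2k}D^{2k+1}\Big)\sum_{j\ge m} c_{m,j}h^{2j}D^{2j} = \sum_{q\ge m} d_{m,q}\, h^{2q} D^{2q+1},
\]
where
\[
d_{m,q} = c_{m,q} + \sum_{k=1}^{q-m} b_k c_{m,q-k}.
\]
Every term has the shape $h^{2q}g^{(2q+1)}$ with $q\ge m\ge 1$, so $L_1[R^m g]$ contains no $g'$ term and involves only odd derivatives. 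For the $m=0$ term we use \eqref{eq:L1_general} directly:
\[
L_1[g] = g' + \sum_{q\ge1} b_q h^{2q}g^{(2q+1)},
\]
so that $C_{0,q}=b_q$.

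Adding these expansions gives the statement with
\[
C_{l,q} = b_q + \sum_{m=1}^{\min(l,q)} \alpha_m d_{m,q}.
\]
Equivalently, we can phrase this as an induction on $l$ with
\[
C_{l,q} = C_{l-1,q} + \alpha_l d_{l,q},
\]
where $d_{l,q}=0$ for $q<l$. This recursive form matches the order-by-order cancellation used later. Note that the $C_{l,q}$ depend only on the kernel moments and the $\alpha_m$, not on $g$, $h$ or $x$.

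The main obstacle is justification rather than algebra. The manipulations treat operators as formal power series in $h^2D^2$, so every coefficient is a finite combination and the composition is well defined. The equality in the lemma should therefore be read as an identity of formal expansions, in the same sense as \eqref{eq:L0_general}. Equivalently, it holds as an asymptotic expansion truncated at any order $h^{2Q}$ with an $O(h^{2Q+2})$ remainder, provided $g$ has enough derivatives near $x$. We would state this interpretation explicitly, consistent with how Lemmas~\ref{lem:Rmg} and \ref{lem:L1Rmg} are used, rather than attempt convergence of the infinite series.
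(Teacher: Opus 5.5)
Your proof is correct and takes the same route as the paper. You unwind the recursion to $g_l = g + \sum_{m=1}^{l}\alpha_m R^m g$, note that each $R^m g$ is a combination of terms $h^{2q}g^{(2q)}$, and apply \eqref{eq:L1_general} term by term so that only terms $h^{2q}g^{(2q+1)}$ appear. Your auxiliary claim that $R^m=\sum_{j\ge m}c_{m,j}h^{2j}D^{2j}$ is a useful tightening: the paper credits this structural fact to Lemma~\ref{lem:Rmg}, which records only the leading term, whereas it actually follows from \eqref{eq:R_expand}, as you show.
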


\begin{proof}
From the recursive definition in \eqref{eq:gm_recursive},
\[
g_l = g + \sum_{j=1}^l \alpha_j\, R^j g.
\]
By Lemma~\ref{lem:Rmg}, each $R^j g$ is a linear combination of terms of the form
\[
h^{2q}\, g^{(2q)}.
\]
Applying $L_1$ to these terms and using \eqref{eq:L1_general} gives only terms of the form
\[
h^{2q}\, g^{(2q+1)}.
\]
This proves the claimed form of the expansion.
\end{proof}

We can now state the main result for a general symmetric kernel.

\begin{theorem}\label{thm:general_iterative}
Suppose $a_1 \neq 0$. Then, for every $l \ge 1$, there exist constants $\alpha_1, \dots, \alpha_l$ such that the recursively defined function $g_l$ in \eqref{eq:gm_recursive} satisfies
\[
L_1[g_l](x) - g'(x) = O(h^{2l+2}).
\]
\end{theorem}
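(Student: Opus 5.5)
We will argue by induction on $l$, choosing the constants one at a time so that each choice kills the current leading term of the bias. By Lemma~\ref{lem:L1gm_form}, for every $l\ge 0$ we can write
\[
L_1[g_l] - g' = \sum_{q=1}^\infty C_{l,q}\, h^{2q}\, g^{(2q+1)}.
\]
The claim $L_1[g_l]-g' = O(h^{2l+2})$ is therefore equivalent to $C_{l,q}=0$ for $q=1,\dots,l$. For $l=0$ there is nothing to check, since $g_0=g$ and $C_{0,q}=b_q$. We will show that $\alpha_l$ can be chosen so that these vanishing conditions pass from $l-1$ to $l$ without disturbing $\alpha_1,\dots,\alpha_{l-1}$.

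For the inductive step, suppose $\alpha_1,\dots,\alpha_{l-1}$ have been fixed so that $C_{l-1,q}=0$ for $q\le l-1$. Then
\[
L_1[g_{l-1}] - g' = C_{l-1,l}\, h^{2l}\, g^{(2l+1)} + O(h^{2l+2}).
\]
Applying linearity of $L_1$ to \eqref{eq:gm_recursive} and using Lemma~\ref{lem:L1Rmg} gives
\[
L_1[g_l] - g' = \bigl(C_{l-1,l} + \alpha_l(-a_1)^l\bigr) h^{2l} g^{(2l+1)} + O(h^{2l+2}).
\]
Because $a_1\neq0$, we can take $\alpha_l := -C_{l-1,l}/(-a_1)^l$, which cancels the $h^{2l}$ term and completes the step. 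The choice depends only on kernel moments through $a_k$ and $b_k$, not on $g$. This is the point to stress: $C_{l-1,l}$ is a constant by Lemma~\ref{lem:L1gm_form}, so the same $\alpha_l$ works for every sufficiently smooth $g$. Also, adding $\alpha_l R^l g$ changes only terms of order $h^{2l}$ and higher, so the cancellations already achieved at orders $h^2,\dots,h^{2l-2}$ are preserved.

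The main care needed is in justifying that the error term in Lemma~\ref{lem:L1Rmg} is genuinely a combination of $h^{2q}g^{(2q+1)}$ with $q\ge l+1$ and universal coefficients. Without this, the remainder would not fit the constant-coefficient form assumed in the next step. To secure it, I would make the coefficients explicit. Formally write $R = -\sum_k a_k h^{2k}D^{2k}$ with $D$ the differentiation operator, so that $R^l = \sum_{q\ge l} c_{l,q} h^{2q} D^{2q}$ with $c_{l,l}=(-a_1)^l$. Then $L_1 R^l g$ has coefficients given by finite convolutions of the $c_{l,q}$ and $b_k$. The remainders are thus formal power series in $h^2$ with kernel-dependent constant coefficients, valid under the paper's standing smoothness assumptions, and the induction closes.
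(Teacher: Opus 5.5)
Your proposal is correct and follows essentially the same route as the paper. Both argue by induction on $l$: Lemma~\ref{lem:L1gm_form} isolates the leading term $C_{l-1,l}h^{2l}g^{(2l+1)}$, Lemma~\ref{lem:L1Rmg} gives the leading term of $\alpha_l L_1[R^l g]$, and the choice $\alpha_l=-C_{l-1,l}/(-a_1)^l$ is exactly the paper's $\alpha_l=-C_l/(-a_1)^l$. Your extra bookkeeping via the formal series $R^l=\sum_{q\ge l}c_{l,q}h^{2q}D^{2q}$ is a worthwhile addition, because it justifies the $g$-independent constants that the paper's proof of Lemma~\ref{lem:L1gm_form} infers only loosely from Lemma~\ref{lem:Rmg}. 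One small correction: ``equivalent to $C_{l,q}=0$'' should read ``implied by'', since only that direction is needed.
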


\begin{proof}
We prove the result by induction on $l$.

When $l = 0$, \eqref{eq:L1_general} gives
\[
L_1[g] - g' = O(h^2).
\]

Now suppose that $\alpha_1, \dots, \alpha_{l-1}$ have been chosen so that
\[
L_1[g_{l-1}] - g' = O(h^{2l}).
\]
By Lemma~\ref{lem:L1gm_form}, the first nonzero term has the form
\[
L_1[g_{l-1}] - g' = C_l\, h^{2l}\, g^{(2l+1)} + O(h^{2l+2})
\]
for some constant $C_l$. Since
\[
g_l = g_{l-1} + \alpha_l\, R^l g,
\]
we have
\[
L_1[g_l] - g' = (L_1[g_{l-1}] - g') + \alpha_l\, L_1[R^l g].
\]
Using Lemma~\ref{lem:L1Rmg}, this becomes
\[
L_1[g_l] - g' = \left( C_l + \alpha_l\,(-a_1)^l \right) h^{2l}\, g^{(2l+1)} + O(h^{2l+2}).
\]
Choosing
\[
\alpha_l = -\frac{C_l}{(-a_1)^l}
\]
removes the leading $h^{2l}$ term. Therefore,
\[
L_1[g_l] - g' = O(h^{2l+2}).
\]
This completes the induction proof.
\end{proof}

Theorem~\ref{thm:general_iterative} only proves the existence of suitable sharpening coefficients, but the derivation above shows that they can be computed for any symmetric kernel, giving an explicit formula expressed in terms of the even-order moments of the kernel. By tracking the coefficient of each $h^{2j}$ term in $L_1[g_l]-g'$ during the construction in~\eqref{eq:gm_recursive}, we obtain, for example,
\begin{align}
\alpha_1 &= \frac{\mu_4}{3\,\mu_2^2}, \label{eq:alpha1_closed}\\
\alpha_2 &= \frac{5\,\mu_4^2 - \mu_2 \mu_6}{30\,\mu_2^4}, \label{eq:alpha2_closed}\\
\alpha_3 &= \frac{3\,\mu_2^2 \mu_8 - 70\,\mu_2 \mu_4 \mu_6 + 175\,\mu_4^3}{1890\,\mu_2^6}. \label{eq:alpha3_closed}
\end{align}
These expressions hold for any symmetric kernel. For the Gaussian kernel, they simplify to $\alpha_l=1$. When $l\geq 4$, the corresponding coefficients can be derived by the same method, but their closed-form expressions become increasingly complex. In practice, these coefficients can be obtained by further numerical recursion; see the supplementary code (\texttt{ds\_alphas.R}) for the specific implementation.

\begin{example}[Sharpening coefficients for the Epanechnikov kernel]\label{ex:epanechnikov}
Consider the Epanechnikov kernel, defined by
\[
K(u)=\frac{3}{4}(1-u^2), \qquad |u|\leq 1,
\]
and $K(u)=0$ for $|u|>1$. Direct integration gives the kernel moments
\[
\mu_2 = \frac{1}{5}, \qquad \mu_4 = \frac{3}{35}, \qquad \mu_6 = \frac{1}{21}, \qquad \mu_8 = \frac{1}{33}.
\]
Substituting these into~\eqref{eq:alpha1_closed}--\eqref{eq:alpha3_closed} gives
\[
\alpha_1 = \frac{5}{7} \approx 0.714, \qquad \alpha_2 = \frac{250}{441} \approx 0.567, \qquad \alpha_3 = \frac{47750}{101871} \approx 0.469.
\]
\end{example}

Table ~\ref{tab:kernel_alphas} provides the first three sharpening coefficients for five commonly used symmetric kernels. Unlike the Gaussian kernel discussed in the next subsection, these coefficients are not all equal to 1, and they are also clearly different from each other.

\begin{table}[ht]
\centering
\caption{Sharpening coefficients $\alpha_1, \alpha_2, \alpha_3$ for common kernels}
\label{tab:kernel_alphas}
\begin{tabular}{lccc}
\hline
Kernel & $\alpha_1$ & $\alpha_2$ & $\alpha_3$ \\
\hline
Gaussian      & $1$                   & $1$       & $1$       \\
Uniform       & $3/5 = 0.6000$        & $0.4114$  & $0.2971$  \\
Epanechnikov  & $5/7 \approx 0.7143$  & $0.5669$  & $0.4687$  \\
Biweight      & $7/9 \approx 0.7778$  & $0.6599$  & $0.5791$  \\
Triweight     & $9/11 \approx 0.8182$ & $0.7209$  & $0.6535$  \\
\hline
\end{tabular}
\end{table}

\subsection{The Gaussian kernel case}\label{sec:gaussian_case}

The Gaussian kernel makes things considerably cleaner. For the standard Gaussian,
\[
\mu_{2k}=\frac{(2k)!}{2^k k!},
\qquad \mu_2=1,
\]
so \eqref{eq:L0_general} and \eqref{eq:L1_general} specialise to
\begin{equation}
L_0[g]
=
g+\sum_{k=1}^\infty \frac{1}{2^k k!} h^{2k} g^{(2k)},
\label{eq:L0_gaussian}
\end{equation}
and
\begin{equation}
L_1[g]
=
g'+\sum_{k=1}^\infty \frac{1}{2^k k!} h^{2k} g^{(2k+1)}.
\label{eq:L1_gaussian}
\end{equation}
Both the $a_k$ and the $b_k$ collapse to the same value:
\[
a_k=b_k=\frac{1}{2^k k!},
\qquad k\ge1.
\]

It turns out that all the sharpening coefficients are then equal to one, so the update in \eqref{eq:gm_recursive} simplifies to
\[
g_0:=g,
\qquad
g_l:=g_{l-1}+R^l g,
\qquad l\ge1,
\]
with \(R=I-L_0\), or, writing things out,
\[
g_l=\sum_{j=0}^l R^j g.
\]
This makes the sharpened derivative estimator explicit.

\begin{proposition}\label{prop:gaussian_explicit}
For every \(l\ge1\),
\begin{equation}
L_1[g_l](x)
=
g'(x)
+
(-1)^l
\sum_{k_l=1}^{\infty}\cdots\sum_{k_0=1}^{\infty}
\frac{h^{2\sum_{j=0}^l k_j}}
{2^{\sum_{j=0}^l k_j}\,k_0!\cdots k_l!}
\,g^{\left(2\sum_{j=0}^l k_j+1\right)}(x).
\label{eq:gaussian_explicit_main}
\end{equation}
Hence
\[
L_1[g_l](x)-g'(x)=O(h^{2l+2}).
\]
\end{proposition}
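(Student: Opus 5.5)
Since everything here is a formal power series in $h^2$ acting on derivatives of $g$, I will argue in operator language. Let $D$ denote differentiation and set $T := \sum_{k\ge1} \frac{h^{2k}}{2^k k!} D^{2k}$, so that, by \eqref{eq:L0_gaussian} and \eqref{eq:L1_gaussian}, $L_0 = I + T$ and $L_1 = D(I+T) = (I+T)D$. This is the key structural simplification in the Gaussian case: because $a_k=b_k$, $L_1$ is exactly $D\circ L_0$ at the level of formal expansions. In particular $R = I - L_0 = -T$. All these operators are power series in $D$, so they commute.

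Using $g_l = \sum_{j=0}^l R^j g$ (with all $\alpha_j=1$), I will compute
\[
L_1[g_l] = D(I+T)\sum_{j=0}^l (-T)^j g .
\]
The factor $(I+T)\sum_{j=0}^l(-T)^j$ telescopes, because $(I+T)\sum_{j=0}^l (-T)^j = I - (-T)^{l+1}$, exactly as for a finite geometric series. Hence
\[
L_1[g_l] = g' - (-1)^{l+1} T^{l+1} g' = g' + (-1)^l\, T^{l+1} g'.
\]
Expanding $T^{l+1}$ as a product of $l+1$ copies of $\sum_{k\ge1} \frac{h^{2k}}{2^k k!}D^{2k}$, with summation indices $k_0,\dots,k_l$, gives precisely the multiple sum in \eqref{eq:gaussian_explicit_main}. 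Each factor contributes $h^{2k_j}/(2^{k_j}k_j!)$ and $D^{2k_j}$, and the extra $D$ from $L_1$ produces the order $2\sum_j k_j + 1$.

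For the order statement, every $k_j\ge1$, so $\sum_j k_j \ge l+1$. Thus every term carries at least $h^{2l+2}$, and the minimal term is $(-1)^l 2^{-(l+1)} h^{2l+2} g^{(2l+3)}$. This gives $O(h^{2l+2})$, with the same formal-expansion interpretation used in Lemmas~\ref{lem:Rmg}--\ref{lem:L1gm_form} (smooth $g$ with bounded derivatives near $x$).

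The main obstacle is justifying that operator algebra on formal series is legitimate: the commutation of $T$ with $D$, and the telescoping identity. I would handle this by noting that everything is a formal power series in $h^2$ whose coefficients are constant-coefficient differential operators, which form a commutative ring. The identity $(1+t)\sum_{j=0}^l(-t)^j = 1-(-t)^{l+1}$ in $\mathbb{R}[t]$ then transfers directly with $t=T$. This argument also confirms a posteriori that $\alpha_l=1$ is the choice from Theorem~\ref{thm:general_iterative}, since the intermediate orders cancel.
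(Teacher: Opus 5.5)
Your proof is correct, and it takes a genuinely different and much shorter route than the paper. The appendix argues by induction on $l$. It expands $(I-L_0)^{l+1}$ binomially into powers $L_0^j$ and writes each $L_1L_0^jg$ as a multi-sum whose indices start at $0$. It then regroups these by the number of positive indices into blocks $X_i$ with multiplicity $\binom{j+1}{i+1}$, and uses Pascal's rule and the alternating identity $\sum_j(-1)^j\binom{n}{j}\binom{j}{k}=0$ ($k<n$) to cancel everything except $(-1)^{l+1}X_{l+1}$.

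You instead note that $R=-T$, where $T$ has no $h^0$ term, so the constraint $k_j\ge1$ is built in from the start. The entire cancellation then reduces to the finite geometric identity $(I+T)\sum_{j=0}^{l}(-T)^j=I-(-T)^{l+1}$, with no induction and no binomial bookkeeping. That bookkeeping is the most fragile part of the appendix: its Pascal step truncates the $j$-sums at $l$. Done correctly, the new contribution is $(-1)^{l+1}(X_l+X_{l+1})$, and its first term cancels the inductive hypothesis.

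Both proofs rest on the same Gaussian-specific fact, $L_1=DL_0=L_0D$. Your route also dissolves the obstacle you flag. The identity $L_0\sum_{j=0}^{l}R^j=I-R^{l+1}$ holds for any linear operator $L_0$ with $R=I-L_0$. For the Gaussian kernel, $L_1=L_0D=DL_0$ holds exactly, not just formally: integrate by parts using $u\phi(u)=-\phi'(u)$, with $\phi$ the standard normal density. Hence $L_1[g_l]-g'=-R^{l+1}g'$ is an exact identity, and formal series are needed only to expand its right-hand side into the multi-sum.

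Moreover, $|Rf|\le\tfrac{h^2}{2}\sup|f''|$ by a second-order Taylor expansion (using $\int u\phi=0$ and $\int u^2\phi=1$), and $RD=DR$. Together these give the non-formal bound $|L_1[g_l](x)-g'(x)|\le(h^2/2)^{l+1}\sup|g^{(2l+3)}|$ whenever $g^{(2l+3)}$ is bounded. This is a genuine version of the order claim, which the paper's formal-series argument does not provide.
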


\begin{proof}[Proof sketch]
Induction on \(l\). For \(l=1\),
\[
g_1=g+Rg=g+(I-L_0)g,
\]
and substituting the Gaussian series expansions of \(L_0\) and \(L_1\) leaves
\[
L_1[g_1](x)
=
g'(x)
-
\sum_{k_1=1}^{\infty}\sum_{k_0=1}^{\infty}
\frac{h^{2(k_0+k_1)}}
{2^{k_0+k_1}k_0!k_1!}
\,g^{(2(k_0+k_1)+1)}(x),
\]
which is \eqref{eq:gaussian_explicit_main} at \(l=1\).

For the induction step, assume \eqref{eq:gaussian_explicit_main} at level \(l\) and use
\[
g_{l+1}=g_l+R^{l+1}g
=
g_l+(I-L_0)^{l+1}g.
\]
Expanding \((I-L_0)^{l+1}\) by the binomial theorem and applying \(L_1\) gives a double-indexed sum that can be grouped by the number of positive indices, and the remaining cross-terms cancel through the Riordan identity
\[
\sum_{j=0}^{n}(-1)^j\binom{n}{j}\binom{j}{k}=0,
\qquad k<n,
\]
which yields \eqref{eq:gaussian_explicit_main} at level \(l+1\).

The smallest value of \(\sum_{j=0}^l k_j\) on the right-hand side of \eqref{eq:gaussian_explicit_main} is \(l+1\), so the leading term is of order \(h^{2l+2}\) and
\[
L_1[g_l](x)-g'(x)=O(h^{2l+2}).
\]
The full combinatorial argument is provided in Appendix~\ref{app:gaussian_combinatorial}.
\end{proof}

So the Gaussian kernel is genuinely special in this framework. For a general kernel the \(\alpha_l\) come out of a recursive cancellation and typically do not admit a simple closed form, but for the Gaussian kernel they collapse to
\[
\alpha_l=1,
\qquad l\ge1.
\]

\subsection{Implications for the variance}

The above results mainly focus on the bias of $L_1[g_l]$. In the actual computation, the estimator acts on the observed responses, so the noise part also needs to be considered. Therefore, sharpening can reduce bias, but it may also change the variance.

For a fixed target point $x$, the order-$l$ sharpened derivative estimator is still a linear estimator of the responses. We write it as
\[
\hat g_l'(x)
= \sum_{i=1}^n w_{l,i}(x) Y_i,
\]
where $w_{l,i}(x)$ is the effective derivative weight after sharpening. These weights depend on the bandwidth, the kernel, the design points, and the sharpening order. When $l = 0$, they are the weights of the ordinary local linear derivative estimator.

Assume
\[
Y_i = g(x_i) + \varepsilon_i,
\qquad
E(\varepsilon_i \mid X) = 0,
\qquad
\operatorname{Var}(\varepsilon_i \mid X) = \sigma^2,
\]
and the error terms are conditionally independent given $X$. Then
\[
\hat g_l'(x)
= \sum_{i=1}^n w_{l,i}(x) g(x_i)
+ \sum_{i=1}^n w_{l,i}(x) \varepsilon_i.
\]
Given the design points, the first term is fixed. Therefore,
\[
\operatorname{Var}\{\hat g_l'(x) \mid X\}
= \operatorname{Var}\!\left\{
\sum_{i=1}^n w_{l,i}(x) \varepsilon_i
\;\middle|\; X
\right\}
= \sigma^2 \sum_{i=1}^n w_{l,i}(x)^2.
\]
Thus, sharpening affects the variance through these effective weights.

The following is an explanation of the usual order of this variance. At an interior point, if the design points are relatively regular, about $nh$ observation points will participate in the local fitting. For local mean estimation, each weight is usually of order $1/(nh)$. For first-order derivative estimation, because the derivative is estimated on the local scale $x_i - x = O(h)$, the weight has an additional factor of $1/h$. Therefore, the order of each ordinary derivative weight is
\[
O\!\left(\frac{1}{nh^2}\right).
\]
Since there are about $nh$ observation points involved in the estimation, the order of the squared sum of weights is
\[
(nh)
\left(
\frac{1}{nh^2}
\right)^2
= \frac{1}{nh^3}.
\]
So, the variance order of the ordinary local linear derivative estimator is
\[
\operatorname{Var}\{\hat g_0'(x) \mid X\}
= O\!\left(\frac{1}{nh^3}\right).
\]
For a fixed sharpening order $l$, the derivative weights are replaced by $w_{l,i}(x)$. However, under the same fixed-design setting, sharpening does not change the basic order of the variance with respect to the bandwidth. Therefore, the sharpened derivative estimator has the same order of variance,
\[
\operatorname{Var}\left\{\hat{g}_l^{\prime}(x) \mid X\right\}
= O\!\left(\frac{1}{n h^3}\right),
\qquad l \text{ fixed},
\]
where the leading constant may depend on $l$, the sharpening coefficients, and the kernel.
Hence, although the proposed sharpening method  reduces the bias from $O(h^2)$ to $O(h^{2l+2})$, it does not directly reduce the variance. This also explains why a moderate sharpening order can improve the IMSE, whereas a higher order, although it may further reduce the bias, may also increase the variance. This point will be illustrated in the simulations below.

\section{Simulation Research and Case Analysis}

\subsection{Simulation design}

We examine the performance of the proposed method in finite samples through simulation. This simulation study has two main goals. First, we hope to test whether the bias order given in Theorem~\ref{thm:general_iterative} can be reflected in the numerical results. Second, we would like to compare the sharpened estimate with the multi-bandwidth derivative benchmark adapted from the bandwidth-extrapolation idea of \citet{cheng2018bias}.

The data are generated from the following model:
\begin{equation*}
Y_i = g(x_i) + \varepsilon_i, \qquad \varepsilon_i \sim \mathcal{N}(0, \sigma^2),
\quad i = 1, \ldots, n,
\end{equation*}
where $n = 601$ and the design points are equally spaced on a given interval. We use three test functions to cover different smoothness situations. The first is $g_1(x) = 25 \sin(x/30) + 5$, defined on $[0, 300]$, which is a slowly changing smooth sine function. The second is $g_2(x) = x^7$, defined on $[-1, 1]$, which is a polynomial function with a higher but finite order derivative. The third is $g_3(x) = \sin(2\pi x)$, defined on $[0, 1]$, which is a smooth function with stronger local curvature. To reduce the impact of boundary effects, we only measure the estimated performance in the interior region of each function: $[60, 240]$ for $g_1$, $[-0.6, 0.6]$ for $g_2$, and $[0.15, 0.85]$ for $g_3$.

We compare six first-order derivative estimates. The first one is written as LL, that is, an ordinary local linear derivative estimate, whose theoretical bias order is $O(h^2)$. The next four are written as SH1--SH4, corresponding to the sharpening orders $l = 1, 2, 3, 4$ respectively, with theoretical bias order $O(h^{2l+2})$. The last one is denoted by MB and is the derivative adaptation of the multi-bandwidth bias reduction idea of \citet{cheng2018bias}. As described in Section~\ref{theory}, the MB method fits the local linear slope at five bandwidths $\{0.6, 0.8, 1.0, 1.2, 1.4\} \times h$ respectively, then extrapolates to $h^2 = 0$ by least squares. Its theoretical bias order is $O(h^4)$, the same as SH1, so it is a natural comparison object.

Throughout, we use the Gaussian kernel. As shown in Proposition~\ref{prop:gaussian_explicit}, all sharpening coefficients $\alpha_j$ are equal to one. Each method is evaluated on the same set of bandwidth grids, and we report the results under its optimal bandwidth, where the optimal bandwidth is defined as the bandwidth that minimizes IMSE. The Monte Carlo simulation uses $M = 500$ repetitions, and the noise level is set to $\sigma = 0.3$.

\subsection{Verification of theoretical bias order}
\label{sec:slopes}

We first use a numerical experiment to check Theorem~\ref{thm:general_iterative}. We set $\sigma=0$, so there is no random noise in the data and the estimation error only comes from the bias of the estimator. We plot the average absolute bias against the bandwidth $h$ on a log-log scale. For each method, we then fit a straight line using the middle part of the bandwidth grid. If the leading asymptotic term is dominant, the fitted slope should be close to the theoretical bias order discussed above.

\begin{table}[ht]
\centering
\caption{The theoretical bias orders and the fitted log-log slopes of $|\text{Bias}|$ against $h$ when $\sigma=0$.}
\label{tab:slopes}
\begin{tabular}{lcccc}
\hline
Method & Theoretical & $g_1$ & $g_2$ & $g_3$ \\
\hline
LL  & 2  & 1.70 & 2.03 & 2.09 \\
SH1 & 4  & 3.52 & 3.37 & 4.59 \\
SH2 & 6  & 5.46 & 4.11 & 6.00 \\
SH3 & 8  & 6.57 & 4.34 & 6.17 \\
SH4 & 10 & 5.95 & 4.37 & 6.34 \\
MB  & 4  & 3.80 & 3.92 & 3.60 \\
\hline
\end{tabular}
\end{table}

As shown in Table~\ref{tab:slopes}, the slopes for the lower-order methods are generally close to the theoretical results. For all three test functions, the slope of LL is close to $2$. The slopes of SH1 and MB are also reasonably close to the theoretical order $4$. In particular, for $g_3$, the measured slope of SH2 is $6.00$, which is very close to the theoretical value.

The agreement becomes weaker for the higher sharpening orders. For example, for $g_1$ and $g_3$, although the slopes of SH3 and SH4 are still clearly larger than those of LL and SH1, they do not reach the theoretical values of $8$ and $10$. In addition, for $g_1$, the slope of SH4 is slightly lower than that of SH3. In finite-sample numerical experiments, this phenomenon is reasonable. As the sharpening order increases, it can be seen from Figure~\ref{fig:bias} that the leading bias has become very small. At this point, the measured error may be affected not only by the leading asymptotic term, but also by higher-order remainder terms and other factors. Therefore, the slope obtained by fitting does not necessarily reflect only the leading term suggested by the theory. Theorem~\ref{thm:general_iterative} gives the pointwise asymptotic bias order, while the calculation in the simulation is the average absolute bias over the interior region. Therefore, the slope corresponding to the average absolute bias is not necessarily exactly the same as the pointwise theoretical order.

\begin{figure}[ht]
\centering
\includegraphics[width=\linewidth]{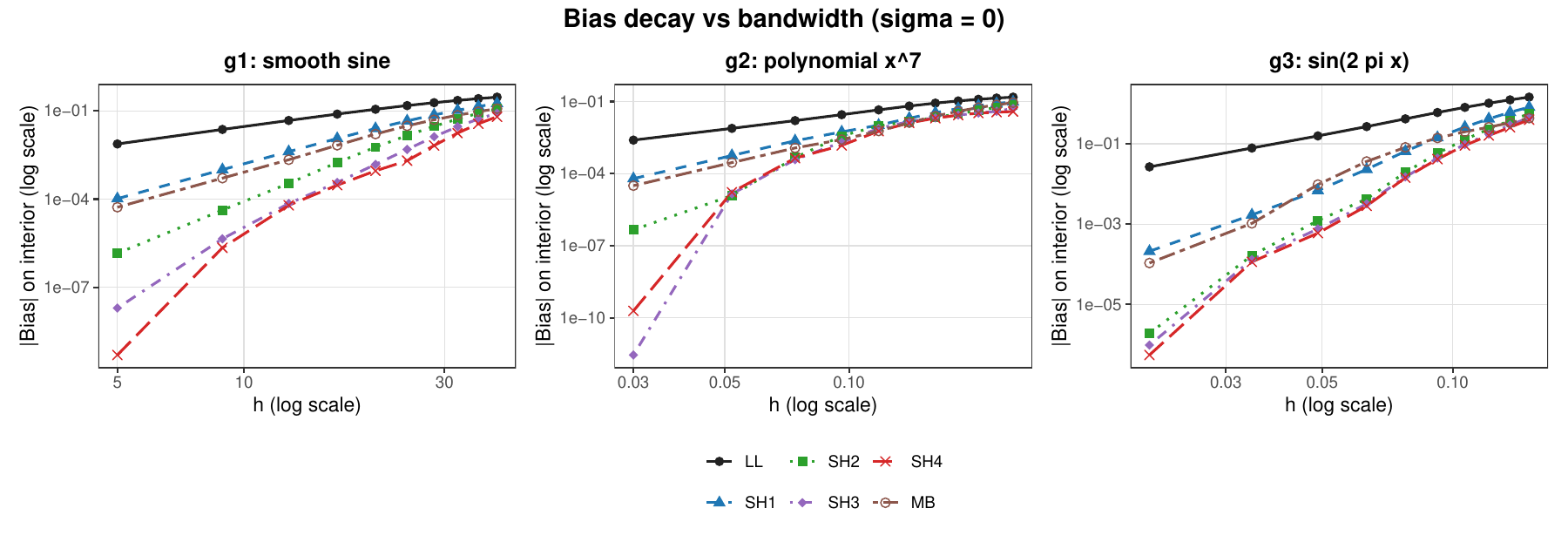}
\caption{Absolute bias as a function of bandwidth, with $\sigma=0$}
\label{fig:bias}
\end{figure}

The polynomial function $g_2(x)=x^7$ is also a special case. When $l\geq 3$,
\[
g_2^{(2l+2)}(x)\equiv 0,
\]
because the derivatives of order eight and above are equal to zero. Therefore, for SH3 and SH4, the nominal leading bias term in Theorem~\ref{thm:general_iterative} vanishes. At this point, the measured slopes should not be directly interpreted as numerical estimates of the theoretical orders $8$ and $10$. The remaining error is mainly determined by the discrete matrix calculation and numerical error. This also explains why the slopes of SH3 and SH4 are very close for $g_2$.

For SH2, the dominant term related to $g_2^{(6)}$ is still not zero. However, the measured slope is only $4.11$. This means that the bandwidth range currently in use may not have fully entered the asymptotic region dominated by the $h^6$ term. In order to get a slope closer to the theoretical value, it may be necessary to use a smaller bandwidth than the current lower bound of $0.03$.

In general, the numerical results clearly support the improvement of the bias order under the lower sharpening orders. For higher-order methods, the bias still decreases faster than for ordinary local linear estimates. However, under the current finite sample and finite bandwidth grid, it is more difficult to obtain a slope that is exactly consistent with the theoretical value.

\subsection{Finite-sample performance and bias--variance trade-off}
\label{sec:finite-sample}

The previous section verifies the order of bias without noise. Here we add noise ($\sigma=0.3$) so that bias and variance are both operative. For each method, we use its optimal bandwidth $h^*$, that is, the bandwidth that makes IMSE the smallest on the bandwidth grid. Table~\ref{tab:imse} gives the average absolute bias, standard deviation and IMSE under three test functions, and Figure~\ref{fig:imse} shows the change of IMSE with bandwidth.

\begin{table}[ht]
\centering
\caption{The best performance of each method over the bandwidth grid when $\sigma=0.3$.}
\label{tab:imse}
\small
\begin{tabular}{llcccc}
\hline
Function & Method & $h^*$ & Avg.\ $|\text{Bias}|$ & SD & IMSE \\
\hline
\multirow{6}{*}{$g_1(x)=25\sin(x/30)+5$}
 & LL  & 5.000  & $7.532\times10^{-3}$ & $7.097\times10^{-3}$ & $1.19\times10^{-4}$ \\
 & SH1 & 8.889  & $1.014\times10^{-3}$ & $4.436\times10^{-3}$ & $2.07\times10^{-5}$ \\
 & SH2 & 16.667 & $1.746\times10^{-3}$ & $2.083\times10^{-3}$ & $7.82\times10^{-6}$ \\
 & SH3 & 16.667 & $3.44\times10^{-4}$  & $2.345\times10^{-3}$ & $\mathbf{5.64\times10^{-6}}$ \\
 & SH4 & 20.556 & $8.95\times10^{-4}$  & $1.886\times10^{-3}$ & $5.97\times10^{-6}$ \\
 & MB  & 12.778 & $2.210\times10^{-3}$ & $3.750\times10^{-3}$ & $1.95\times10^{-5}$ \\
\hline
\multirow{6}{*}{$g_2(x)=x^7$}
 & LL  & 0.140 & 0.0671 & 0.1229 & 0.0248 \\
 & SH1 & 0.206 & 0.0688 & 0.1066 & 0.0197 \\
 & SH2 & 0.250 & 0.0726 & 0.0960 & 0.0169 \\
 & SH3 & 0.250 & 0.0512 & 0.1082 & $\mathbf{0.0153}$ \\
 & SH4 & 0.250 & 0.0368 & 0.1178 & 0.0158 \\
 & MB  & 0.228 & 0.0701 & 0.1329 & 0.0300 \\
\hline
\multirow{6}{*}{$g_3(x)=\sin(2\pi x)$}
 & LL  & 0.063 & 0.2699 & 0.2837 & 0.1753 \\
 & SH1 & 0.092 & 0.1466 & 0.2470 & $\mathbf{0.0923}$ \\
 & SH2 & 0.107 & 0.1325 & 0.2362 & 0.0967 \\
 & SH3 & 0.107 & 0.1007 & 0.2656 & 0.1030 \\
 & SH4 & 0.107 & 0.0955 & 0.2888 & 0.1136 \\
 & MB  & 0.107 & 0.1982 & 0.2814 & 0.1204 \\
\hline
\end{tabular}
\end{table}

Three main conclusions can be drawn from Table~\ref{tab:imse}.

First, the sharpening method reduces the minimum IMSE on all three test functions. This improvement is most obvious for $g_1$. The minimum IMSE of LL is $1.19\times10^{-4}$, while SH3 reduces it to $5.64\times10^{-6}$, a decrease of more than $95\%$. For $g_2$, SH3 reduces IMSE from $0.0248$ to $0.0153$. For $g_3$, the minimum IMSE is obtained from SH1, with a value of $0.0923$, while the result of LL is $0.1753$. These results show that, compared with ordinary local linear estimates, sharpening can bring a clear improvement.

Second, our method is generally comparable to the MB method, and performs better in many cases. For each test function, at least one sharpened estimator has a smaller IMSE than MB. This difference is especially obvious on $g_2$. The minimum IMSE of MB is $0.0300$, which is higher than LL's $0.0248$ and significantly higher than SH3's $0.0153$. In addition, estimates obtained using larger bandwidths may also be more susceptible to finite-sample effects. This may explain why MB is not better than LL in the example of $g_2$.

Third, the optimal sharpening order depends on the specific test function. For $g_1$ and $g_2$, the minimum IMSE is obtained at $l=3$; for $g_3$, it is obtained at $l=1$. When the sharpening order exceeds the optimal value, although increasing the order further can still reduce the bias, the increase in the standard deviation exceeds the benefit from the bias reduction, so the IMSE rebounds slightly. Therefore, a higher sharpening order does not necessarily give a smaller IMSE. As $l$ increases, the selected bandwidth usually increases as well. This shows that once the bias has been reduced, a larger bandwidth can be used to lower the variance.

Figure~\ref{fig:imse} gives the complete IMSE curves. For $g_1$, all methods show an obvious U-shaped pattern. As the sharpening order increases, the lowest point of the curve usually moves to a larger bandwidth. For $g_2$ and $g_3$, when $h$ is larger, the curves of different methods are closer. However, in the smaller and medium bandwidth ranges, the difference between the curves is more obvious, because the influence of bias is greater in this range.

\begin{figure}[ht]
\centering
\includegraphics[width=\linewidth]{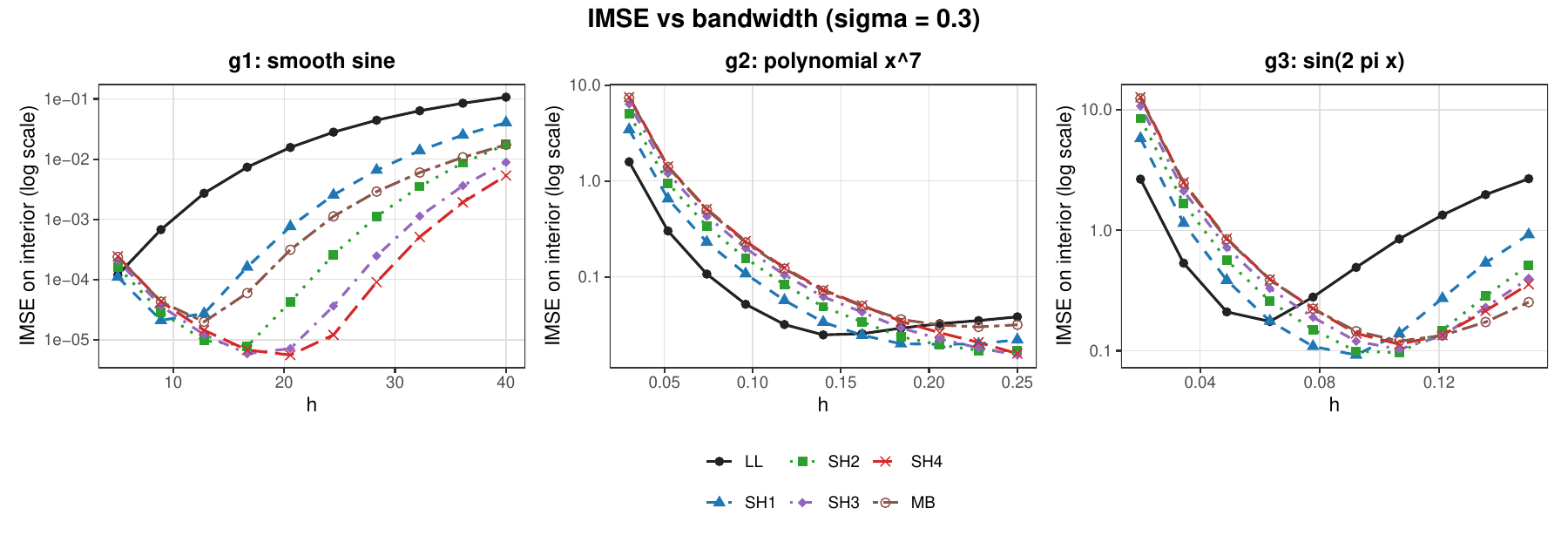}
\caption{IMSE as a function of bandwidth, with $\sigma = 0.3$ and $M = 500$ repetitions.}
\label{fig:imse}
\end{figure}

\subsection{Application to the motorcycle impact data}
\label{sec:mcycle}

Next, we apply the proposed method to a real data set, namely the motorcycle impact data in the \texttt{MASS} package. The data record the head acceleration corresponding to different time points in the simulated motorcycle collision experiment, in units of $g$. After sorting the observations according to the collision time, let $t$ denote the time after the collision and $Y$ the head acceleration. Therefore, $g(t)$ represents the mean acceleration curve, and $g'(t)$ represents the rate of change of the mean acceleration over time.

The simulation results in Section~\ref{sec:finite-sample} show that a higher sharpening order can reduce the bias, but may also increase the variability of the estimate. Therefore, in this real-data application, we mainly consider the lower-order sharpened estimates, namely $l=1$ and $l=2$. Since the true derivative in the real data is unknown, this is also a relatively conservative choice. The motorcycle data contain faster local changes during the main impact stage, so this example can help us examine how low-order sharpening affects these local features. We continue to use the Gaussian kernel. The local linear smoothing matrix is constructed at the observed time points, and the derivative is estimated at 400 equally spaced grid points between the minimum and maximum observation times. We compare the ordinary local linear derivative estimate with the first-order and second-order sharpened estimates and the derivative-adapted multi-bandwidth benchmark described in Section~\ref{theory}. Four bandwidths are considered, namely $h=4,5,6,$ and $7$.

Figure~\ref{fig:mcycle_mean} shows the raw data together with the local linear mean estimate at $h=5$. The fitted curve drops quickly during the early collision stage and then recovers. These changes are clearer on the derivative scale.

\begin{figure}[ht]
\centering
\includegraphics[width=0.80\textwidth]{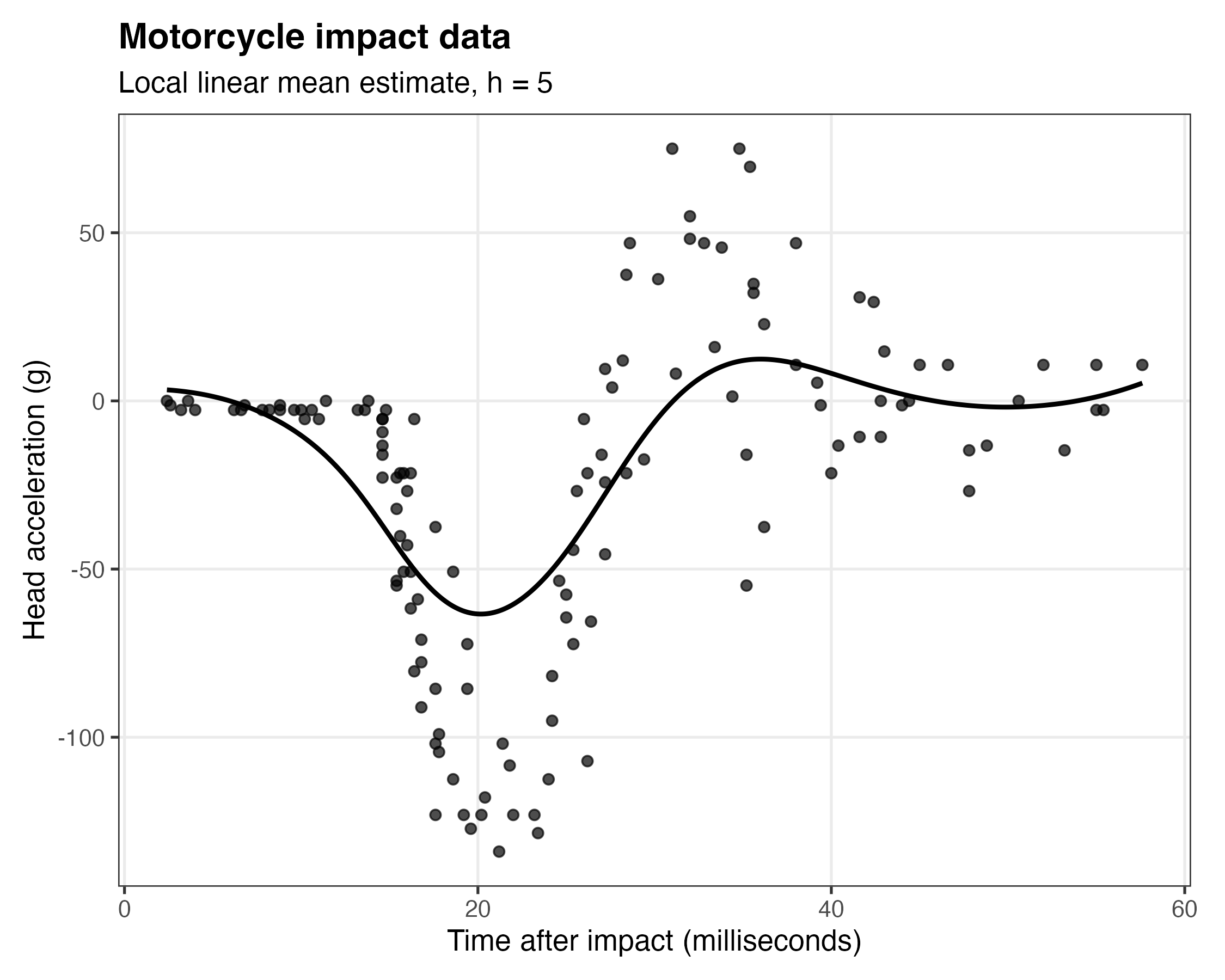}
\caption{Motorcycle impact data and the local linear mean estimate obtained with bandwidth $h=5$.}
\label{fig:mcycle_mean}
\end{figure}

Figure~\ref{fig:mcycle_derivative} compares the four derivative estimators under the selected bandwidths. Compared with the ordinary estimate, the sharpened estimates give a more negative derivative during the initial decline and a larger positive derivative during the recovery, which makes the peaks and valleys of the curve more pronounced.

\begin{figure}[p]
\centering
\includegraphics[width=0.90\textwidth]{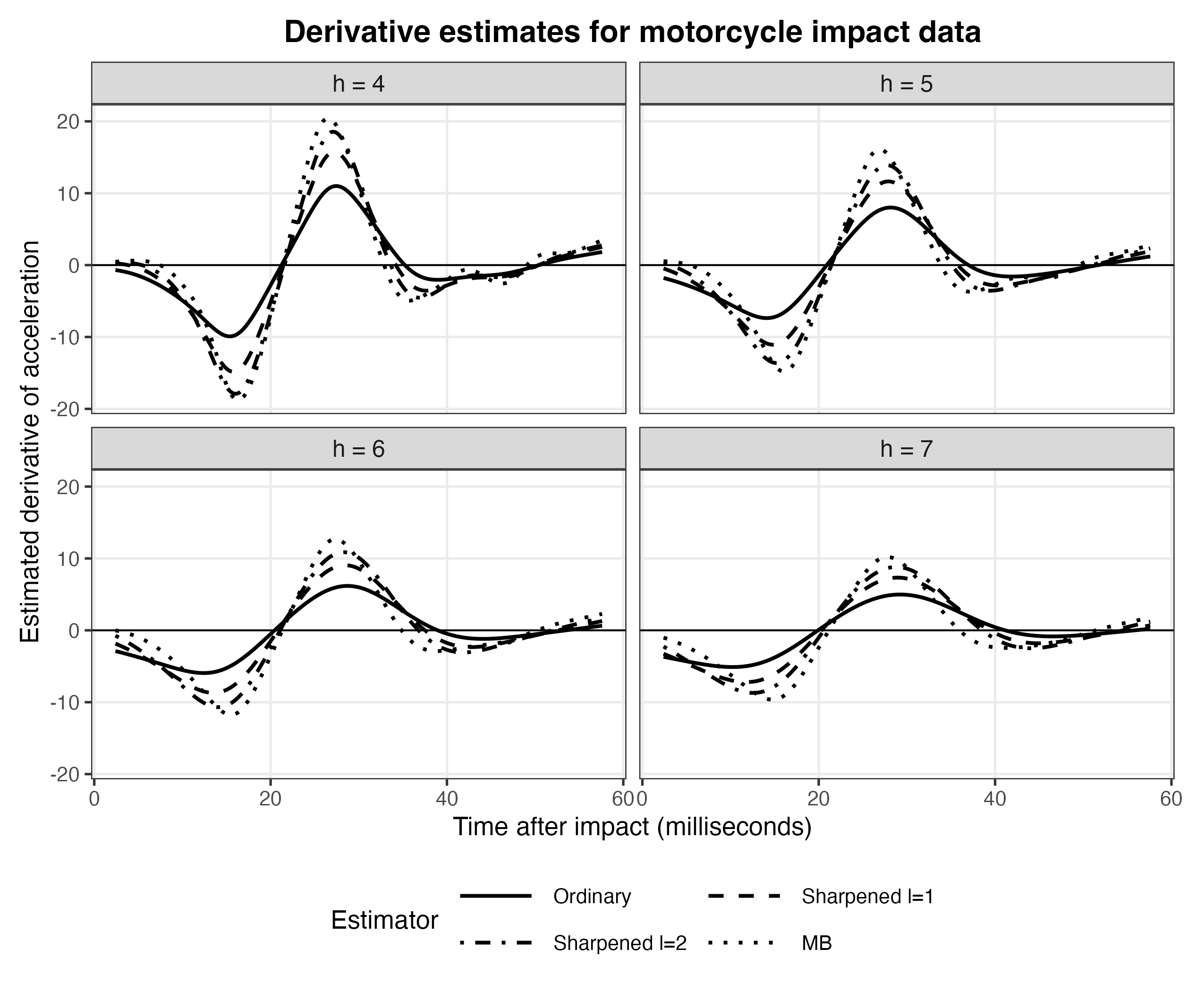}
\caption{Derivative estimates for the motorcycle impact data under bandwidths $h=4,5,6,$ and $7$. The ordinary local linear derivative estimator is compared with the first- and second-order sharpened estimators and the derivative-adapted multi-bandwidth benchmark. The horizontal line represents zero derivative.}
\label{fig:mcycle_derivative}
\end{figure}

Table~\ref{tab:mcycle_summary_h5} reports a few features of the estimated derivative curve at $h=5$. The minimum and maximum derivatives describe the negative and positive phases. The first zero crossing is the estimated time at which the acceleration curve turns from decreasing to increasing. The roughness statistic
\[
\sum_j\left\{\hat g'(t_{j+1})-\hat g'(t_j)\right\}^2
\]
measures the local variation of the curve.

\begin{table}[ht]
\centering
\caption{Summary of the estimated derivative curves for the motorcycle
impact data at bandwidth $h=5$.}
\label{tab:mcycle_summary_h5}
\begin{tabular}{lrrrr}
\hline
Estimator & Min. derivative & Max. derivative &
First zero crossing & Roughness \\
\hline
Ordinary & $-7.372$  & $8.010$  & $20.910$ & $4.461$  \\
Sharpened ($l=1$) & $-11.076$ & $11.643$ & $21.246$ & $11.105$ \\
Sharpened ($l=2$) & $-13.619$ & $13.905$ & $21.416$ & $17.372$ \\
MB & $-14.815$ & $16.116$ & $21.428$ & $24.208$ \\
\hline
\end{tabular}
\end{table}

At $h=5$, the negative and positive phases become more pronounced as the sharpening order increases, but the estimated curves also get rougher. The roughness rises from $4.461$ for the ordinary estimator to $11.105$ and $17.372$ for the first- and second-order sharpened estimators. The MB estimate is the roughest, at $24.208$. This matches what we saw in the simulations, where stronger bias correction came with more variability. 

\section{Conclusion and Future Directions}\label{sec:conc}

This paper proposes an iterative data sharpening method. The purpose is to reduce the bias of the local linear first-order derivative estimate. The method is based on two expectation operators $L_0$ and $L_1$, and the residual operator $R=I-L_0$. By choosing the appropriate sharpening coefficients, we can gradually eliminate the dominant derivative bias. For general symmetric kernels, after $l$ sharpening steps, the order of the bias is $O(h^{2l+2})$. The Gaussian kernel is a special case. At this time, all sharpening coefficients are equal to one, so the estimate becomes simple and only a single bandwidth is used. We have also shown that the multi-bandwidth procedure of \citet{cheng2018bias} can be generalized to derivative estimation.

The simulation results support the claim that sharpening reduces bias. At the same time, the results also show a clear bias--variance trade-off in finite samples. For the smooth sine function, higher-order sharpening can reduce both the average absolute bias and the IMSE. But the results for the higher-order polynomial function are different, because a higher sharpening order can reduce the bias but also increases the standard deviation. Therefore, a moderate sharpening order is usually a sensible choice. The motorcycle impact data show a similar phenomenon. After sharpening, the changes in the estimated derivative become more pronounced, but the curve also becomes rougher. Overall, as a bias reduction method, the practical performance of the method proposed in this paper may depend on the bandwidth, the sample size, the shape of the true function, and the sharpening order.

The theory in this paper mainly focuses on bias reduction at interior points. Boundary effects and the variance properties of the sharpened estimator still need further study. Therefore, future work can continue in several directions. First, one can study sharpening methods with boundary correction, which can be used for derivative estimation near the two ends of the support. Second, the variance properties of the sharpened derivative estimator also need a more detailed discussion. Finally, the method can also be applied to other problems where derivative estimation is directly needed.

\section*{Disclosure statement}

No potential conflict of interest was reported by the authors.

\section*{Funding}

No funding was received.

\bibliography{mybibfile}

\appendix

\section{Additional Proofs}

\subsection{A combinatorial derivation for the Gaussian kernel}\label{app:gaussian_combinatorial}

This appendix gives a direct combinatorial derivation of the explicit multi-sum representation for the iterated sharpening estimator in the Gaussian kernel case.

For the Gaussian kernel,
\[
\mu_{2k}=\frac{(2k)!}{2^k k!},
\]
so the operators \(L_0\) and \(L_1\) admit the expansions
\[
L_0[g](x)=\sum_{k=0}^{\infty}\frac{h^{2k}}{2^k k!}g^{(2k)}(x),
\qquad
L_1[g](x)=\sum_{k=0}^{\infty}\frac{h^{2k}}{2^k k!}g^{(2k+1)}(x).
\]

The iterated sharpened sequence is
\[
g_0:=g,
\qquad
g_l:=g_{l-1}+(I-L_0)^l g,
\qquad l\ge1,
\]
and the binomial expansion
\[
(I-L_0)^l=\sum_{j=0}^l (-1)^j \binom{l}{j}L_0^j
\]
rewrites the update step as
\begin{equation}
g_l
=
g_{l-1}+\sum_{j=0}^l (-1)^j \binom{l}{j}L_0^j g.
\label{eq:appendix_update}
\end{equation}
The corresponding derivative estimator is
\[
\hat g_l'(x):=L_1[g_l](x).
\]

The explicit formula is the following.

\begin{proposition}\label{prop:appendix_gaussian_formula}
For every \(l\ge1\),
\begin{equation}
L_1[g_l](x)
=
g'(x)
+
(-1)^l
\sum_{k_l=1}^{\infty}\cdots\sum_{k_0=1}^{\infty}
\frac{h^{2\sum_{j=0}^l k_j}}
{2^{\sum_{j=0}^l k_j}\,k_0!\cdots k_l!}
\,g^{\left(2\sum_{j=0}^l k_j+1\right)}(x).
\label{eq:appendix_main_formula}
\end{equation}
\end{proposition}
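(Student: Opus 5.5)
The plan is to avoid the binomial expansion and the Riordan identity altogether. Instead I would work at the level of operators, using the Gaussian identity $a_k=b_k=1/(2^k k!)$. Write $D$ for differentiation in $x$. By \eqref{eq:L0_gaussian} and \eqref{eq:L1_gaussian}, both $L_0$ and $L_1$ are formal power series in $h^2$ whose coefficients are polynomials in $D$:
\[
L_0=\sum_{k=0}^\infty \frac{1}{k!}\Bigl(\frac{h^2D^2}{2}\Bigr)^k,
\qquad
L_1=\sum_{k=0}^\infty \frac{1}{k!}\Bigl(\frac{h^2D^2}{2}\Bigr)^k D = D\,L_0 .
\]
All manipulations take place in the commutative ring of formal power series in $h^2$ with coefficients in $\mathbb{R}[D]$. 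Equalities there translate into equalities of asymptotic expansions, and for finitely smooth $g$ one truncates at the needed power of $h$. In this ring $R=I-L_0$, $L_0$ and $D$ all commute. The key structural fact is $L_1=DL_0$, and this is exactly where the Gaussian kernel is special.

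Next I would use the telescoping identity. With all $\alpha_j=1$, $g_l=\sum_{j=0}^l R^j g$. Since $L_0=I-R$, we get
\[
L_0 g_l=(I-R)\sum_{j=0}^l R^j g=(I-R^{l+1})g .
\]
Applying $D$ and using $L_1=DL_0$ then gives the closed form
\[
L_1[g_l]=D L_0 g_l = g' - D R^{l+1} g .
\]
This step needs no induction and no inverse of $L_0$, only commutativity.

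Finally I would expand $R^{l+1}$. We have
\[
R=-\sum_{k=1}^\infty \frac{h^{2k}D^{2k}}{2^k k!},
\]
so $R^{l+1}=(-1)^{l+1}$ times the product of $l+1$ copies of this sum. Multiplying these out with independent indices $k_0,\dots,k_l\ge1$ gives
\[
-DR^{l+1}g=(-1)^l\sum_{k_0,\dots,k_l\ge1}\frac{h^{2\sum k_j}}{2^{\sum k_j}k_0!\cdots k_l!}\,g^{(2\sum k_j+1)} ,
\]
which is \eqref{eq:appendix_main_formula}. The minimal total index is $l+1$, so the $O(h^{2l+2})$ claim of Proposition~\ref{prop:gaussian_explicit} follows immediately. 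As a consistency check, taking $l=1$ recovers the displayed $l=1$ case in the proof sketch. The same identity also shows why $\alpha_l=1$ is the right choice: any other value would leave a nonvanishing $h^{2l}g^{(2l+1)}$ term, by Lemma~\ref{lem:L1Rmg}.

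The main obstacle I expect is rigor rather than algebra: justifying that composing the formal expansions (\eqref{eq:L0_gaussian} applied to the output of $L_0$, and so on) is legitimate. I would handle it by noting that each operator maps a Taylor expansion to a Taylor expansion, and that composition of these series is associative and commutative coefficientwise. For finitely smooth $g$, I would truncate every series at order $h^{2N}$ with $N>l+1$ and carry an $O(h^{2N})$ remainder, which does not affect the identity up to that order. If one insists on following the paper's route instead, the binomial expansion of $(I-L_0)^{l+1}$ combined with the Riordan identity is just the coordinate version of the telescoping step above.
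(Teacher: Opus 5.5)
Your proof is correct, and it takes a different route from the paper's. Both arguments rely on the same Gaussian-specific fact, $L_1=DL_0=L_0D$, which the paper invokes in its base case.

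\emph{The paper's route.} It proceeds by induction on $l$. It expands $(I-L_0)^{l+1}$ binomially and writes each $L_1L_0^jg$ as a sum over $j+1$ nonnegative indices. It then regroups by the number of strictly positive indices, producing the coefficients $\binom{j+1}{i+1}$ in front of the multi-sums $X_i$ with $i+1$ positive indices. Finally it removes the cross terms with Pascal's rule and the Riordan identity.

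\emph{Your route.} You replace all of this with the telescoping identity $L_0g_l=(I-R^{l+1})g$. This gives the closed form $L_1[g_l]=g'-DR^{l+1}g$ for every $l$ at once. The multi-sum is then just the expansion of an $(l+1)$-fold product of the series for $-R$.

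\emph{What each buys.} Your argument needs no induction and no combinatorial identities, and it shows the whole error as the single term $-D(I-L_0)^{l+1}g$. It also sidesteps the most delicate point of the coordinate proof. There, the new piece $L_1[(I-L_0)^{l+1}g]$ must come out as $(-1)^{l+1}(X_l+X_{l+1})$ to cancel the $(-1)^lX_l$ carried by the induction hypothesis. Getting this right requires keeping the $j=l+1$ term of the binomial sum and the boundary cases $i=-1$ and $i=l+1$ of the regrouping. In your notation this piece is simply $DR^{l+1}g-DR^{l+2}g$. The paper's route has the modest advantage of checking the coefficient of each $h^{2m}g^{(2m+1)}$ explicitly. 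That is the same coefficient-tracking style used to compute the $\alpha_l$ for general kernels.

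\emph{Rigor.} This is less of an obstacle than you expect. For the Gaussian, $uK(u)=-K'(u)$ and $\mu_2=1$. Integration by parts then gives $\int uK(u)g(x+hu)\,du=h\int K(u)g'(x+hu)\,du$. So $L_1=L_0D=DL_0$ holds exactly for the integral operators, for instance when $g$ has polynomially bounded derivatives. Your identity $L_1[g_l]=g'-D(I-L_0)^{l+1}g$ is therefore exact. Formal series, or truncated Taylor remainders, are needed only in the last step, the expansion of $R^{l+1}g$.
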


\begin{proof}
Induction on \(l\).

At \(l=1\),
\[
g_1=g+(I-L_0)g,
\]
hence
\[
L_1[g_1]
=
L_1[g]+L_1[(I-L_0)g].
\]
For the Gaussian kernel, \(L_0\) acts formally as \(\exp\{(h^2/2)D_x^2\}\) and \(L_1\) as \(D_xL_0=L_0D_x\), with \(D_x=d/dx\), so the two operators commute. The display above becomes
\[
L_1[g_1]
=
L_1[g]+(I-L_0)L_1[g].
\]
Writing each term out,
\[
L_1[g](x)
=
\sum_{k=0}^{\infty}\frac{h^{2k}}{2^k k!}g^{(2k+1)}(x),
\]
and
\[
L_0L_1[g](x)
=
\sum_{k_0=0}^{\infty}\sum_{k_1=0}^{\infty}
\frac{h^{2(k_0+k_1)}}
{2^{k_0+k_1}k_0!k_1!}
g^{(2(k_0+k_1)+1)}(x),
\]
so that
\[
\begin{aligned}
L_1[g_1](x)
&=
\sum_{k=0}^{\infty}\frac{h^{2k}}{2^k k!}g^{(2k+1)}(x)
+
\sum_{k=0}^{\infty}\frac{h^{2k}}{2^k k!}g^{(2k+1)}(x) \\
&\qquad
-
\sum_{k_0=0}^{\infty}\sum_{k_1=0}^{\infty}
\frac{h^{2(k_0+k_1)}}
{2^{k_0+k_1}k_0!k_1!}
g^{(2(k_0+k_1)+1)}(x).
\end{aligned}
\]
Pulling out the \(k=0\) term gives
\[
L_1[g_1](x)
=
g'(x)
+
\sum_{k=1}^{\infty}\frac{h^{2k}}{2^k k!}g^{(2k+1)}(x)
-
\sum_{k_0=0}^{\infty}\sum_{k_1=1}^{\infty}
\frac{h^{2(k_0+k_1)}}
{2^{k_0+k_1}k_0!k_1!}
g^{(2(k_0+k_1)+1)}(x),
\]
and after the obvious cancellation,
\[
L_1[g_1](x)
=
g'(x)
-
\sum_{k_0=1}^{\infty}\sum_{k_1=1}^{\infty}
\frac{h^{2(k_0+k_1)}}
{2^{k_0+k_1}k_0!k_1!}
g^{(2(k_0+k_1)+1)}(x),
\]
which is \eqref{eq:appendix_main_formula} at \(l=1\).

For the induction step, take \(l\ge1\) and assume
\begin{equation}
L_1[g_l](x)
=
g'(x)
+
(-1)^l
\sum_{k_l=1}^{\infty}\cdots\sum_{k_0=1}^{\infty}
\frac{h^{2\sum_{j=0}^l k_j}}
{2^{\sum_{j=0}^l k_j}\,k_0!\cdots k_l!}
\,g^{\left(2\sum_{j=0}^l k_j+1\right)}(x).
\label{eq:appendix_induction}
\end{equation}
From \eqref{eq:appendix_update},
\[
g_{l+1}=g_l+(I-L_0)^{l+1}g,
\]
so
\[
L_1[g_{l+1}]
=
L_1[g_l]+L_1[(I-L_0)^{l+1}g].
\]
The binomial expansion gives
\[
(I-L_0)^{l+1}
=
\sum_{j=0}^{l+1}(-1)^j\binom{l+1}{j}L_0^j,
\]
and therefore
\[
L_1[(I-L_0)^{l+1}g]
=
\sum_{j=0}^{l+1}(-1)^j\binom{l+1}{j}L_1L_0^j g.
\]
For each fixed \(j\),
\[
L_1L_0^j g
=
\sum_{k_j=0}^{\infty}\cdots\sum_{k_0=0}^{\infty}
\frac{h^{2\sum_{r=0}^j k_r}}
{2^{\sum_{r=0}^j k_r}\,k_0!\cdots k_j!}
\,g^{\left(2\sum_{r=0}^j k_r+1\right)}(x).
\]
Putting the pieces together,
\[
\begin{aligned}
L_1[g_{l+1}](x)
&=
g'(x)
+
(-1)^l
\sum_{k_l=1}^{\infty}\cdots\sum_{k_0=1}^{\infty}
\frac{h^{2\sum_{r=0}^l k_r}}
{2^{\sum_{r=0}^l k_r}\,k_0!\cdots k_l!}
\,g^{\left(2\sum_{r=0}^l k_r+1\right)}(x) \\
&\quad
+
\sum_{j=0}^{l+1}(-1)^j\binom{l+1}{j}
\sum_{k_j=0}^{\infty}\cdots\sum_{k_0=0}^{\infty}
\frac{h^{2\sum_{r=0}^j k_r}}
{2^{\sum_{r=0}^j k_r}\,k_0!\cdots k_j!}
\,g^{\left(2\sum_{r=0}^j k_r+1\right)}(x).
\end{aligned}
\]
The second piece can be rewritten as
\[
\sum_{j=0}^{l+1}(-1)^j\binom{l+1}{j}
\left\{
\sum_{i=-1}^{j}
\binom{j+1}{i+1}X_i
\right\},
\]
where for each \(i\ge0\),
\[
X_i
:=
\sum_{k_i=1}^{\infty}\cdots\sum_{k_0=1}^{\infty}
\frac{h^{2\sum_{r=0}^i k_r}}
{2^{\sum_{r=0}^i k_r}\,k_0!\cdots k_i!}
\,g^{\left(2\sum_{r=0}^i k_r+1\right)}(x).
\]
Using the Pascal identity
\[
\binom{j+1}{i+1}=\binom{j}{i}+\binom{j}{i+1},
\]
we get
\begin{equation}
\sum_{j=0}^{l+1}(-1)^j\binom{l+1}{j}\binom{j+1}{i+1}X_i
=
\sum_{j=0}^{l}(-1)^j\binom{l+1}{j}\binom{j}{i}X_i
+
\sum_{j=0}^{l}(-1)^j\binom{l+1}{j}\binom{j}{i+1}X_i.
\label{eq:appendix_binomial_split}
\end{equation}
The Riordan identity
\[
\sum_{j=0}^{n}(-1)^j\binom{n}{j}\binom{j}{k}=0,
\qquad k<n,
\]
then says that, for each fixed $i$,
\[
\sum_{j=0}^{l}(-1)^j\binom{l+1}{j}\binom{j}{i}X_i
=
\begin{cases}
(-1)^l X_l, & i=l,\\[0.3em]
0, & i<l,
\end{cases}
\]
and
\[
\sum_{j=0}^{l}(-1)^j\binom{l+1}{j}\binom{j}{i+1}X_i
=
\begin{cases}
-(-1)^l X_l+(-1)^{l+1}X_{l+1}, & i=l,\\[0.3em]
0, & i<l.
\end{cases}
\]
The only piece that survives the cancellation is
\[
(-1)^{l+1}X_{l+1}.
\]
Plugging this back in,
\[
L_1[g_{l+1}](x)
=
g'(x)
+
(-1)^{l+1}
\sum_{k_{l+1}=1}^{\infty}\cdots\sum_{k_0=1}^{\infty}
\frac{h^{2\sum_{r=0}^{l+1} k_r}}
{2^{\sum_{r=0}^{l+1} k_r}\,k_0!\cdots k_{l+1}!}
\,g^{\left(2\sum_{r=0}^{l+1} k_r+1\right)}(x),
\]
which is \eqref{eq:appendix_main_formula} with \(l\) replaced by \(l+1\). This finishes the induction proof.
\end{proof}

\end{document}